\title{Feedback Capacity of Gaussian Channels with Memory}
\author{Oron Sabag, Victoria Kostina, Babak Hassibi}
\date{November 2020}
\DeclareMathOperator*{\nn}{\nonumber}
\DeclareMathOperator{\E}{\mathbb{E}}
\DeclareMathOperator{\vy}{\mathbf{y}}
\DeclareMathOperator{\ve}{\mathbf{e}}
\DeclareMathOperator{\vm}{\mathbf{m}}
\DeclareMathOperator{\vv}{\mathbf{v}}
\DeclareMathOperator{\vw}{\mathbf{w}}
\DeclareMathOperator{\cov}{\mathbf{cov}}
\DeclareMathOperator{\vz}{\mathbf{z}}
\DeclareMathOperator{\vx}{\mathbf{x}}
\DeclareMathOperator{\vs}{\mathbf{s}}
\DeclareMathOperator{\vhs}{\hat{\mathbf{s}}}
\DeclareMathOperator{\vhhs}{\hat{\hat{\mathbf{s}}}}
\newtheorem{lemma}{Lemma}
\newtheorem{theorem}{Theorem}
\theoremstyle{definition}
\newtheorem{assumption}{Assumption}
\newtheorem{conjecture}{Conjecture}
\def\blfootnote{\gdef\@thefnmark{}\@footnotetext}
\begin{document}

\maketitle
\begin{abstract}
We consider the feedback capacity of a MIMO channel whose channel output is given by a linear state-space model driven by the channel inputs and a Gaussian process. The generality of our state-space model subsumes all previous studied models such as additive channels with colored Gaussian noise, and channels with an arbitrary dependence on previous channel inputs or outputs. The main result is a computable feedback capacity expression that is given as a convex optimization problem subject to a detectability condition. We demonstrate the capacity result on the auto-regressive Gaussian noise channel, where we show that even a single time-instance delay in the feedback reduces the feedback capacity significantly in the stationary regime. On the other hand, for large regression parameters, the feedback capacity can be achieved with delayed feedback. Finally, we show that the detectability condition is satisfied for scalar models and conjecture that it is true for MIMO models.
\end{abstract}
\blfootnote{The authors are with the Department of Electrical Engineering at California Institute of Technology (e-mails:
 \{oron,vkostina,hassibi\}@caltech.edu).}

\section{Introduction}
The feedback capacity of additive Gaussian channels with memory has attracted much interest in information and control theory \cite{Butman69,TiernanSchalk_AR_UB,Ebert,elia_bodemeets,Han_GaussianFeedback,KalmanConnectionISIT20,charalambous2020new}. Notably, \cite{YangKavcicTatikondaGaussian} showed that the capacity problem can be formulated as a Markov decision process, revealing underlying structural relations with control and filtering theory. In \cite{Kim10_Feedback_capacity_stationary_Gaussian}, a finite-dimensional optimization problem was proposed for the case of a colored Gaussian noise generated by a particular state-space model, and in \cite{Gattami} using a change of variable, the optimization problem was shown to be convex with linear matrix inequalities (LMI) constraints. This is an important result but, technically, the change of variable relied on an erroneous claim. In \cite{sabag_MIMO_isit}, the feedback capacity of multiple-input multiple-output (MIMO) channels with colored Gaussian noise has been established. The techniques used in \cite{sabag_MIMO_isit} differ from the ones used in \cite{Kim10_Feedback_capacity_stationary_Gaussian,Gattami}, and they cover a larger class of colored Gaussian noises. Concurrently with \cite{sabag_MIMO_isit}, \cite{Elia_MIMO_ITW} showed that the methodology in \cite{Kim10_Feedback_capacity_stationary_Gaussian,Gattami} for stationary noise processes can be extended to MIMO channels with ISI. Neither \cite{Elia_MIMO_ITW} or \cite{sabag_MIMO_isit} are special cases of one another, but these recent developments on the long-standing capacity problem call for a unified general theory. In this paper, we consider a very general channel formulation that subsumes all studied channels in~\cite{Kim10_Feedback_capacity_stationary_Gaussian,Gattami,sabag_MIMO_isit,Elia_MIMO_ITW} and show that techniques from \cite{sabag_MIMO_isit} can be utilized to conclude the feedback capacity of the general setting.

The channel output of the channel studied in this paper is given by a general state-space model. In particular, the channel is MIMO, and the channel output is a vector $\vy_i\in\mathbb{R}^p$ that is described by the state-space model
\begin{align}\label{eq:lqg_ss}
\vs_{i+1}&= F\vs_i + G \vx_i + \vw_i \nn\\
\vy_i&= H\vs_i + J\vx_i + \vv_i,
\end{align}
where $\vx_i\in\mathbb{R}^m$ is the channel input, $(\vw_i,\vv_i)\sim N\left(0,\begin{pmatrix}W&L\\L^T&V\end{pmatrix}\right)$ is a white Gaussian process and $\vs_i\in\mathbb{R}^n$. The initial state is distributed as $\vs_1\sim N(0,\Sigma_1)$ with $\Sigma_1\succeq0$.

The channel in \eqref{eq:lqg_ss} has memory since past events of the transmission such as previous channel inputs, channel outputs or noise occurrences can affect the current channel output via the hidden state vector $\vs_i$. Several examples are in order.
\begin{enumerate}
    \item The additive white Gaussian noise MIMO channel $$\vy_i = J\vx_i + \vv_i$$ is recovered from \eqref{eq:lqg_ss} with $H=0$.
        \item A channel $\vy_i = J\vx_i + \vz_i$ with an additive colored Gaussian noise process
    \begin{align}
        \vs_{i+1}&= F\vs_i + \vw_i \nn\\
        \vz_i&= H\vs_i  + \vv_i
    \end{align}
    is revealed with $G=0$ in \eqref{eq:lqg_ss}.
    \item A channel with first-order intersymbol interference (ISI)
    $$\vy_i = J\vx_i + H\vx_{i-1} + \vv_i$$
    is revealed with $F=W=L=0$ and $G=I$.
    \item A channel with first-order channel outputs dependence
    $$\vy_i = J\vx_i + H\vy_{i-1} + \vv_i$$
    is revealed from \eqref{eq:lqg_ss} with $F=H$, $G=J$ and $W=V=L$.
    \item Delayed feedback: the state-space in \eqref{eq:lqg_ss} allows one to consider the case where the encoder has access to a delayed feedback. That is, at time $i$, the encoder has only access to the channel outputs $\vy_1,\vy_2,\dots,\vy_{i-d}$ with $d>1$. In \cite{yang_delayed}, it was shown that delayed feedback can be realized with a state-space whose hidden state depends on the channel inputs. This fact has been also utilized in \cite{Elia_MIMO_ITW} and will be demonstrated in Section \ref{sec:main}.
    \item Note that all the variables in \eqref{eq:lqg_ss} are vectors, and therefore, any combination of the special cases above can be realized by \eqref{eq:lqg_ss} as well.
\end{enumerate}

Our main contribution is a solution to the feedback capacity of \eqref{eq:lqg_ss} by showing that it can be formulated as a convex optimization. Our converse proof extends the techniques developed in \cite{sabag_MIMO_isit} for the colored Gaussian noise in $(2)$ to the general channel in $(1)$. The converse technique places only mild regularity conditions on the matrices $F,H,G,J$ in \eqref{eq:lqg_ss} that include, for instance, unstable $F$. The achievability relies on an slight extension of the achievability proof by Cover and Pombra in \cite{Cover89}. A lower bound is then realized by convergence of Riccati recursions.


\section{The setting and Preliminaries}
In this section, we present the setting and preliminaries on Kalman filtering and Riccati equations.

\subsection{The setting and notation}
The channel is given by \eqref{eq:lqg_ss} and a mild assumption on the channel is given in Assumption \ref{ass:1}. At time $i$, the encoder has access to noiseless feedback of the previous channel outputs $\vy^{i-1}\triangleq \{\vy_1,\dots,\vy_{i-1}\}$. Thus, the encoder mapping at time $i$ maps $\vy^{i-1}$ and a message that is distributed uniformly over $[1:2^{nR}]$ to the channel inputs vector in $\mathbb{R}^m$. For a fixed blocklength $n$, the channel input has an average power constraint $\frac1{n} \sum_{i=1}^n \E[\vx_i^T\vx_i]\le P$, and the decoder maps the tuple $\vy^n$ to the message set. Definitions of the average probability of error, an achievable rate and the feedback capacity are standard and can be found, for instance, in \cite{Kim10_Feedback_capacity_stationary_Gaussian}. Throughout the paper, we use $A^\dagger$ to denote the Moore–Penrose inverse of $A$.

\subsection{Kalman filter}
We proceed to present the estimate of the hidden state at the encoder and its recursive computation using a Kalman filter.
\begin{lemma}[Encoder's estimate]\label{lemma:encoder}
The encoder's estimate, defined by $\vhs_i \triangleq \E[\vs_i|\vx^{i-1},\vy^{i-1}]$, can be computed recursively as
\begin{align}\label{eq:enc_recursive}
   \vhs_{i+1}&= F\vhs_{i} + G\vx_i + K_{p,i} (\vy_i - J\vx_i - H\vhs_i)
\end{align}
with the initial condition $\vhs_1=0$. The corresponding error covariance $\Sigma_i\triangleq \cov(\vs_i-\vhs_i)$ can be updated as
\begin{align}\label{eq:enc_recursive_cov}
    \Sigma_{i+1}&= F\Sigma_i F^T + W  - K_{p,i} \Psi_iK_{p,i}^T
\end{align}
with the initial condition $\Sigma_1$. The constants in \eqref{eq:enc_recursive}-\eqref{eq:enc_recursive_cov} are
\begin{align}
    K_{p,i}&= (F\Sigma_iH^T + L)\Psi_i^{-1}\nn\\
    \Psi_i&= H \Sigma_i H^T+V.
\end{align}
Moreover, the innovations process $\ve_i= \vy_i - J\vx_i - H\vhs_i$ is white, distributed according to $\ve_i\sim N(0,\Psi_i)$, and $\ve_i$ is independent of $(\vx^{i},\vy^{i-1})$.
\end{lemma}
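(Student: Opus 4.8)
The plan is to prove this by recognizing it as the standard Kalman filter for the state-space system, but with a crucial observation about what is known at the encoder. The key insight is that, at time $i$, the encoder knows $\vx^{i-1}$ (its own past inputs, since it generated them) and $\vy^{i-1}$. Moreover, $\vx_i$ is a deterministic function of $(\vx^{i-1},\vy^{i-1})$ — equivalently, conditioned on this information, $\vx_i$ is a known constant. This lets me treat $G\vx_i$ and $J\vx_i$ as known control/offset terms that can be subtracted out, reducing the problem to estimating $\vs_i$ in a linear-Gaussian system where only $(\vw_i,\vv_i)$ are stochastic. I would set up the filter for the ``input-free'' output $\vy_i - J\vx_i$ and then add back the deterministic contributions.

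First I would verify the recursion by induction. The base case $\vhs_1 = \E[\vs_1] = 0$ holds by the prior $\vs_1 \sim N(0,\Sigma_1)$. For the inductive step, I would compute $\vhs_{i+1} = \E[\vs_{i+1}\mid \vx^i,\vy^i]$ by first conditioning on $(\vx^{i-1},\vy^{i-1})$ and then incorporating the new observation $\vy_i$. Since the pair $(\vs_i,\vy_i)$ is jointly Gaussian given $(\vx^{i-1},\vy^{i-1})$ with $\vx_i$ a known offset, the standard Gaussian conditioning (orthogonality principle) gives the update in terms of the innovation $\ve_i = \vy_i - J\vx_i - H\vhs_i$. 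The gain $K_{p,i}$ arises as $\cov(\vs_i,\ve_i)\,\cov(\ve_i)^{-1}$; here $\cov(\ve_i) = H\Sigma_i H^T + V = \Psi_i$, and the cross-covariance picks up the $L$ term because $\vw_i$ and $\vv_i$ are correlated. Propagating $\vhs_i$ through the state equation $\vs_{i+1} = F\vs_i + G\vx_i + \vw_i$ and noting $\E[\vw_i\mid \vx^i,\vy^i]$ contributes via the correlation $L$ yields the claimed $K_{p,i} = (F\Sigma_i H^T + L)\Psi_i^{-1}$. The covariance recursion then follows by standard Riccati algebra, $\Sigma_{i+1} = F\Sigma_i F^T + W - K_{p,i}\Psi_i K_{p,i}^T$.

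For the innovations properties, I would argue as follows. Whiteness and Gaussianity: $\ve_i$ is by construction the component of $\vy_i$ orthogonal to (the linear span of) $(\vx^i,\vy^{i-1})$, hence jointly Gaussian and uncorrelated with — thus independent of — $(\vx^i,\vy^{i-1})$; its covariance is $\Psi_i$ as computed above. Since each $\ve_j$ for $j<i$ is measurable with respect to $(\vx^i,\vy^{i-1})$, independence from that sigma-algebra immediately gives $\E[\ve_i\ve_j^T]=0$, establishing that the innovations are white across time. The measurability of $\vx_i$ with respect to $(\vy^{i-1})$ via the encoder map is what makes the conditioning on $\vx^i$ as opposed to $\vy^{i-1}$ alone consistent, and I would spell that out to justify including $\vx^i$ in the conditioning.

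\textbf{The main obstacle} will be the correlation between the process and measurement noises (the off-diagonal block $L$), which prevents a naive separation of the prediction and update steps and is precisely what makes the gain involve $L$ rather than just $F\Sigma_i H^T$. Concretely, $\vw_i$ is correlated with $\vv_i$ and hence with the current innovation $\ve_i$, so when I propagate the estimate to time $i+1$ I must carefully track $\E[\vw_i\mid \ve_i]$ rather than assuming $\vw_i$ is orthogonal to the observation. Handling this correctly — either by the correlated-noise Kalman formulas directly or by a preliminary change of variables that decorrelates $\vw_i$ from $\vv_i$ — is the delicate part; everything else reduces to bookkeeping with Gaussian conditional expectations and the orthogonality principle.
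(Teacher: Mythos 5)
Your overall route---the correlated-noise Kalman filter with the channel input treated as a known offset---is exactly the standard derivation the paper itself points to (it omits the proof, citing standard Kalman filtering theory), but two of your justifications fail in precisely the way that makes this lemma non-textbook, namely the presence of feedback. First, $\vx_i$ is \emph{not} a deterministic function of $(\vx^{i-1},\vy^{i-1})$: it depends on the message $M$, and the lemma must hold for the whole class of causally conditioned input distributions $P(\vx^n\|\vy^n)$ in \eqref{eq:DI_n}---indeed the optimal policy of Lemma \ref{lemma:policy} injects fresh private randomness $\vm_i$ at every step. What you actually need (and what the paper invokes in step $(a)$ of the proof of Lemma \ref{lemma:policy}) is the Markov chain $\vs_i - (\vx^{i-1},\vy^{i-1}) - \vx_i$: conditioned on the observed past, the extra randomness in $\vx_i$ is independent of the state, so additionally conditioning on $\vx_i$ leaves the conditional law of $(\vs_i,\vw_i,\vv_i)$ unchanged. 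Your determinism premise would imply this, but it is false as stated, so the step where you ``include $\vx^i$ in the conditioning'' rests on a wrong reason.

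Second, and more seriously, your independence argument for the innovations---``$\ve_i$ is the component of $\vy_i$ orthogonal to the linear span of $(\vx^i,\vy^{i-1})$, hence jointly Gaussian and uncorrelated, thus independent''---does not go through, because with feedback the ensemble $(\vx^n,\vy^n)$ is \emph{not} jointly Gaussian: the encoder map $\vx_i=f_i(M,\vy^{i-1})$ is an arbitrary measurable, generally nonlinear, function, so uncorrelatedness does not yield independence. The repair, standard in this literature, is an induction establishing that conditioned on $(\vx^{i-1},\vy^{i-1})$ the error $\vs_i-\vhs_i$ is $N(0,\Sigma_i)$ with $\Sigma_i$ a \emph{deterministic}, realization-independent matrix; this holds because, once the inputs are conditioned on, $\vs_i$ is an affine function of the primitive Gaussians $(\vs_1,\vw^{i-1},\vv^{i-1})$---the same structural fact the paper uses in step $(b)$ of the proof of Lemma \ref{lemma:policy}. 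Then $\ve_i=H(\vs_i-\vhs_i)+\vv_i$ has conditional law $N(0,\Psi_i)$ not depending on the conditioning realization, which delivers Gaussianity, independence of $(\vx^i,\vy^{i-1})$, and whiteness (since $\ve^{i-1}$ is measurable with respect to $(\vx^{i-1},\vy^{i-1})$) in one stroke. Your gain and covariance algebra, including the role of the cross-covariance $L$ in $K_{p,i}=(F\Sigma_iH^T+L)\Psi_i^{-1}$, is otherwise correct and matches the cited standard treatment.
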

The result is standard in Kalman filtering theory, e.g. \cite{kailath_booklinear}. Its proof is omitted due to space limitations.

The asymptotic behaviour of the sequence $\Psi_i$ determines the feedback capacity value. Here, we provide a sufficient condition to guarantee that $\Sigma_i$ converges to the maximal solution, denoted by  $\Sigma$, of the Riccati equation
\begin{align}\label{eq:riccati_general}
    \Sigma&= F \Sigma F^T  + W- K_p\Psi K_p^T,
\end{align}
where $K_p = (F\Sigma H^T + L) \Psi^{-1}$ and $\Psi = H\Sigma H^T + V$.
\begin{assumption}\label{ass:1}
The pair $(F,H)$ is detectable. That is, there exists a matrix $K$ such that $\rho(F-KH)<1$ where $\rho(\cdot)$ is the spectral radius of a matrix. We also assume that $\Sigma_1\succeq \Sigma$.
\end{assumption}
Assumption \ref{ass:1} is sufficient for $\Sigma_i$ to converge to the maximal solution of the Riccati equation~\cite{convergence_detectable_initial}. Note that if the matrix $F$ is stable, the detectability condition is satisfied with $K=0$.

There are several methods to check the detectability of a pair of matrices. As detectability is also part of the capacity solution in Theorem \ref{th:main} below, we present here a simple numerical method using LMI optimization \cite[Sec. $3.12.2$]{caverly2019lmi}: a pair $(A,B)$ is detectable iff there exists $P\succ0$ such that
\begin{align}\label{eq:LMI_detec}
    \begin{pmatrix}
        P & PA\\
        A^TP& P + B^TB
    \end{pmatrix}\succ0.
\end{align}

\section{Feedback Capacity}\label{sec:main}
The following is our main result.
\begin{theorem}[MIMO channels]\label{th:main}
The feedback capacity of the Gaussian channel with memory in \eqref{eq:lqg_ss} with a power constraint $P$, subject to Assumption \ref{ass:1}, is equal to the convex optimization
\begin{align}\label{eq:th_main}
    &\max_{\Gamma,\Pi,\hat{\Sigma}} \ \ \frac1{2}\log \det (\Psi_{Y}) - \frac1{2}\log\det(\Psi)\nn\\
&\ s.t. \ \begin{pmatrix}
    \Pi & \Gamma\\
    \Gamma^T& \hat{\Sigma}
\end{pmatrix} \succeq0, \begin{pmatrix}
     \Omega & K_{Y}\Psi_{Y} \\
    \Psi_{Y} K_{Y}^T & \Psi_{Y}
\end{pmatrix}\succeq0, \ \mathbf{Tr}(\Pi)\le P, \nn\\
&    \Psi_{Y}= H \hat{\Sigma} H^T + J  \Pi J^T +H \Gamma^T J^T + J\Gamma H^T + \Psi \\
&\Omega = F \hat{\Sigma} F^T + G \Pi G^T  + G\Gamma F^T  +  F \Gamma^TG^T + K_p \Psi K_p^T - \hat{ \Sigma}\nn\\
& K_{Y}= ( F \hat{ \Sigma}H^T + F \Gamma^TJ^T + G\Gamma H^T + G\Pi J^T + K_{p}\Psi) \Psi_{Y}^{-1},\nn
\end{align}
if the optimal pair $(F +G\Gamma \hat{\Sigma}^\dagger,H +J\Gamma \hat{\Sigma}^\dagger)$ is detectable.
\end{theorem}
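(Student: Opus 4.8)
\section{Proof proposal}

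The plan is to establish the theorem by a matching converse and achievability, both organized around the innovations representation of Lemma~\ref{lemma:encoder}. Throughout, $\Psi$ should be read as the input\nobreakdash-independent steady\nobreakdash-state innovations covariance of Assumption~\ref{ass:1}, so that the term $-\tfrac12\log\det\Psi$ is a constant, while the decision variables $(\Gamma,\Pi,\hat\Sigma)$ encode the stationary second\nobreakdash-order statistics of the jointly Gaussian pair $(\vhs_i,\vx_i)$ together with the resulting one\nobreakdash-step output prediction covariance $\Psi_Y$. A direct computation using the decomposition $\vy_i=H\vhs_i+J\vx_i+\ve_i$ shows that the affine expressions in the statement satisfy $\Psi_Y=\cov(\vy_i)$, $K_Y\Psi_Y=\cov(\vhs_{i+1},\vy_i)$ and $\Omega+\hat\Sigma=\cov(\vhs_{i+1})$ when these are propagated through the closed\nobreakdash-loop matrices with state uncertainty $\hat\Sigma$ and innovations covariance $\Psi$. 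The first payoff is convexity: since $\Psi$ is constant and $\Psi_Y$ is affine in $(\Gamma,\Pi,\hat\Sigma)$, the objective $\tfrac12\log\det\Psi_Y-\tfrac12\log\det\Psi$ is concave, and both semidefinite constraints are affine, so the program is convex and it remains only to identify its value with the feedback capacity.

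For the converse I would begin from Fano's inequality and the standard feedback bound $nR\le\sum_{i}\big[h(\vy_i\,|\,\vy^{i-1})-h(\vy_i\,|\,\vx^i,\vy^{i-1})\big]+n\epsilon_n$, where the second entropy is legitimate because $\vx^i$ is a deterministic function of the message and $\vy^{i-1}$. Lemma~\ref{lemma:encoder} identifies $h(\vy_i\,|\,\vx^i,\vy^{i-1})=\tfrac12\log\det(2\pi e\,\Psi_i)$ with $\Psi_i\to\Psi$ under Assumption~\ref{ass:1}, producing the $-\tfrac12\log\det\Psi$ term, and the maximum\nobreakdash-entropy bound gives $h(\vy_i\,|\,\vy^{i-1})\le\tfrac12\log\det(2\pi e\,\cov(\vy_i\,|\,\vy^{i-1}))$. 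I would then pass to the time\nobreakdash-averaged statistics of $(\vhs_i,\vx_i)$: the first LMI is simply the positive semidefiniteness of the joint covariance $\cov(\vx_i,\vhs_i)$, while the second LMI is, after taking the Schur complement with respect to $\Psi_Y$, the Riccati\nobreakdash-type inequality $\cov(\vhs_{i+1}\,|\,\vy_i)\succeq\hat\Sigma$ that the decoder's prediction\nobreakdash-error covariance must obey. Concavity of $\tfrac12\log\det\Psi_Y$ together with Jensen's inequality then lifts the per\nobreakdash-symbol bounds to the single\nobreakdash-letter program, yielding $R\le$ its optimal value.

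For achievability, given any feasible $(\Gamma,\Pi,\hat\Sigma)$ I would use the Cover--Pombra scheme of \cite{Cover89} with the stationary linear input $\vx_i=\Gamma\hat\Sigma^{\dagger}\vhs_i+\vu_i$, where $\vu_i\sim N(0,\Pi-\Gamma\hat\Sigma^{\dagger}\Gamma^{T})$ is white and independent of the channel; the first LMI guarantees that this covariance is well defined and that $\cov(\vx_i)=\Pi$, so the power constraint is exactly $\mathbf{Tr}(\Pi)\le P$. Substituting this input into \eqref{eq:lqg_ss} produces the closed loop with transition $F+G\Gamma\hat\Sigma^{\dagger}$ and observation $H+J\Gamma\hat\Sigma^{\dagger}$, which is precisely the pair appearing in the detectability hypothesis. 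At this point the detectability of $(F+G\Gamma\hat\Sigma^{\dagger},H+J\Gamma\hat\Sigma^{\dagger})$ enters exactly as Assumption~\ref{ass:1} did for the open loop: the decoder's Riccati recursion for this closed loop converges to the maximal solution, at which the relaxed inequality $\cov(\vhs_{i+1}\,|\,\vy_i)\succeq\hat\Sigma$ holds with equality, so the achieved one\nobreakdash-step output prediction covariance equals $\Psi_Y$. Convergence of the recursion then lets the finite\nobreakdash-blocklength Cover--Pombra rate approach $\tfrac12\log\det\Psi_Y-\tfrac12\log\det\Psi$.

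The main obstacle is the tightness of the relaxation in the second LMI, which is exactly where the change of variables in \cite{Gattami} was flawed, and it must be controlled on two fronts. First, the substitution $\vx_i=\Gamma\hat\Sigma^{\dagger}\vhs_i+\vu_i$ is meaningful only on the range of $\hat\Sigma$, so I must show that the first LMI forces $\mathrm{range}(\Gamma^{T})\subseteq\mathrm{range}(\hat\Sigma)$ and that the pseudo\nobreakdash-inverse identities $\Gamma\hat\Sigma^{\dagger}\hat\Sigma=\Gamma$ and $\hat\Sigma\hat\Sigma^{\dagger}\Gamma^{T}=\Gamma^{T}$, used to match $\Psi_Y$, $K_Y$ and $\Omega$ to the closed\nobreakdash-loop statistics, hold irrespective of any rank deficiency of $\hat\Sigma$. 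Second, I must argue that the concave and monotone objective drives the optimum to the maximal solution of the Riccati equation, where the inequality becomes an equality; this is what makes the relaxed program lossless, and it is here that the closed\nobreakdash-loop detectability is indispensable, since it guarantees a unique stabilizing fixed point that the recursion actually reaches. I expect the pseudo\nobreakdash-inverse bookkeeping and this maximality argument to be the technical heart of the proof.
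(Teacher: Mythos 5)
Your overall architecture matches the paper's (innovations representation from the encoder's Kalman filter, inputs linear in the estimation error plus independent Gaussian noise, the change of variables to $(\Gamma,\Pi,\hat{\Sigma})$ with the two Schur-complement LMIs, and Riccati convergence under closed-loop detectability for achievability), but there are two concrete gaps. First, your achievable input law $\vx_i=\Gamma\hat{\Sigma}^{\dagger}\vhs_i+\vu_i$ omits subtracting the decoder's estimate $\vhhs_i=\E[\vhs_i|\vy^{i-1}]$; the paper's optimal policy (Lemma \ref{lemma:policy}) is $\vx_i=\Gamma_i\hat{\Sigma}_i^{\dagger}(\vhs_i-\vhhs_i)+\vm_i$, and the distinction is not cosmetic. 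The variable $\hat{\Sigma}$ is by definition $\cov(\vhs_i-\vhhs_i)$, and only this error process has covariance $\hat{\Sigma}$; with your input, $\cov(\vx_i)=\Gamma\hat{\Sigma}^{\dagger}\cov(\vhs_i)\hat{\Sigma}^{\dagger}\Gamma^{T}+\Pi-\Gamma\hat{\Sigma}^{\dagger}\Gamma^{T}$, which exceeds $\Pi$ because $\cov(\vhs_i)=\cov(\vhhs_i)+\hat{\Sigma}\succeq\hat{\Sigma}$, generally strictly and even unboundedly when $F+G\Gamma\hat{\Sigma}^{\dagger}$ is unstable (which the theorem permits --- detectability stabilizes only the error dynamics, not the state). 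So your claims $\cov(\vx_i)=\Pi$ and power cost $\mathbf{Tr}(\Pi)$ fail. The same slip appears in your converse, where the first LMI is the joint covariance of $(\vx_i,\vhs_i-\vhhs_i)$, not of $(\vx_i,\vhs_i)$; likewise $\Psi_Y$ is the innovations covariance $\cov(\vy_i\,|\,\vy^{i-1})$, not $\cov(\vy_i)$.

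Second, you invoke the Cover--Pombra achievability of \cite{Cover89} as if it applied directly, but it does not: unless $G=0$, the channel output cannot be written as $\vy^n=J_n\vx^n+\vz^n$ with $\vz^n$ independent of the inputs, since the effective noise $\vz_i=H\vs_i+\vv_i$ is influenced by past inputs through the hidden state. This is precisely the point the paper must work around --- it redoes the error analysis of \cite[Sec.~VII]{Cover89} (their Eq.~(64)), establishing $h(\vy^n|\vm^n)=\tfrac{1}{2}\sum_{i=1}^n\log\det\Psi_i+\tfrac{np}{2}\log(2\pi e)$ by using that $\vx^i$ is a causal function of $\{\vm_i\}$ and that, given the inputs, $\vz^{i-1}$ is a sufficient statistic for the state-space \eqref{eq:lqg_ss}. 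As written, your achievability argument would fail for every channel with ISI or input-dependent state ($G\neq0$), i.e., exactly the cases that distinguish this theorem from \cite{sabag_MIMO_isit}. A smaller structural remark: the paper does not show the relaxation is lossless by arguing the optimizer sits at a Riccati fixed point; instead the relaxed program is an upper bound by construction, and the lower bound (Lemma \ref{lemma:achievable}) shows any feasible point satisfying the detectability constraint \eqref{eq:constraint_detec} is achievable via convergence of the Riccati recursion, which is where your ``maximal solution'' plan should be anchored. Your pseudo-inverse concerns (range$(\Gamma^T)\subseteq$ range$(\hat{\Sigma})$ from the first LMI, i.e., the orthogonality condition \eqref{eq:lemma_policy_orthognality}) are well placed and match the paper's bookkeeping.
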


\begin{figure}[b]
    \centering
\psfrag{X}[t][][1]{The regression parameter $\beta$}
\psfrag{Y}[b][][1]{Rate [bits/ch. use]}
\psfrag{M}[l][][0.52]{Feedback capacity}
\psfrag{N}[l][][0.55]{$2$-delay feedback}
\psfrag{O}[l][][0.55]{$3$-delay feedback }
\psfrag{P}[l][][0.55]{$4$-delay feedback }
\psfrag{Q}[l][][0.55]{No feedback}
\psfrag{T}[][][0.8]{Capacity of the Auto-Regressive Gaussian Noise Channel}

\includegraphics[scale=0.19]{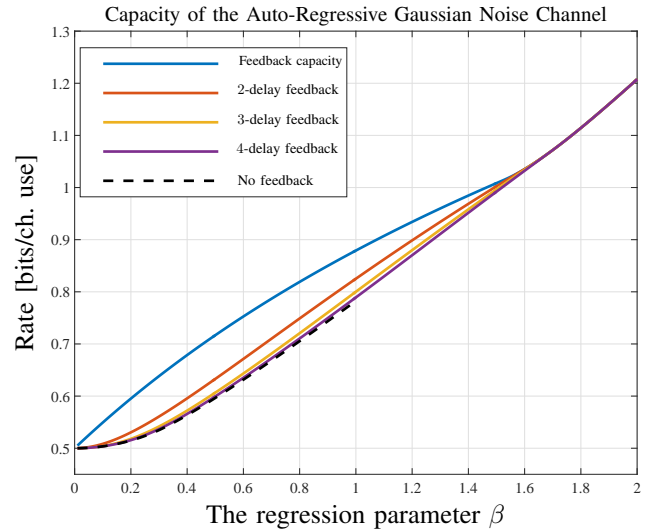}
    \caption{The feedback capacity of the Gaussian channel with a first-order auto-regressive noise and a unit-power constraint. The dashed-black curve describes the no-feedback capacity computed via a water-filling solution and the other curves correspond to the feedback capacity with various delays.}
    \label{fig:AR_capa}
\end{figure}
The objective is a concave function as $\Psi_Y$ is a linear function of $(\Gamma,\Pi,\hat{\Sigma})$. Moreover, the matrix inequalities include only linear terms of the decision variables. Thus, this convex optimization can be carried out with CVX \cite{cvx}. All previous capacity solutions, e.g. \cite{sabag_MIMO_isit,Elia_MIMO_ITW}, are special cases of Theorem~\ref{th:main}.

\subsection{Example: Auto-regressive noise}
The auto-regressive (AR) process of first order is given by \begin{align}\label{eq:AR_discussion}
    z_i&= \beta z_{i-1} + w_i,
\end{align}
where $w_i\sim N(0,1)$ is an i.i.d. sequence and $\beta$ is the regression parameter. This is one of the simplest instances of colored Gaussian noise \cite{Butman69,TiernanSchalk_AR_UB}. Its feedback capacity was derived in \cite{Kim10_Feedback_capacity_stationary_Gaussian} for $|\beta| <1$, and in \cite{sabag_MIMO_isit} for arbitrary $\beta$. In our general setting, we can study arbitrary $\beta$, and we also evaluate the delayed-feedback capacity \cite{yang_delayed,Elia_MIMO_ITW}. For instance, if we consider the scenario where the channel input $\vx_i$ only depends on the delayed-feedback tuple $\vy^{i-2}$, we can realize this scenario with the state-space model
\begin{align}\label{eq:delay_ss}
\vs_{i+1}&= \begin{pmatrix}
        \beta&0\\0&0
    \end{pmatrix} \vs_i + \begin{pmatrix}
    0\\1
    \end{pmatrix} x_i + \vw_i \nn\\
y_i&= \begin{pmatrix}
        \beta & 1
    \end{pmatrix}\vs_i + 0\cdot x_i + v_i,
\end{align}
with $(\vw_i,v_i)\sim N(0,\Lambda )$ with $\Lambda = \left(\begin{array}{cc;{2pt/2pt}c}
    1&0&1\\
    0&0&0\\ \hdashline[2pt/2pt]
    1&0&1
\end{array}\right)$.
The main idea is that the current channel input $x_i$ does not affect the current channel output. Instead, $\vx_i$ is stored as the second element of the hidden state $\vs_{i+1}$ in order to affect the channel output at the next time. This simple idea can be extended to arbitrary, but finite delay, and the corresponding delayed-feedback capacity can be directly computed with the convex optimization in Theorem~\ref{th:main}.

In Fig. \ref{fig:AR_capa}, the feedback capacity for various delays, as well as the no-feedback capacity (using water-filling) are plotted for $P=1$. It can be seen that the delayed-feedback curves approach the feedback capacity for growing regression parameter $\beta$. The largest capacity reduction occurs when the feedback capacity with an instantaneous feedback has an additional time-instance delay, that is, the channel input depends on $\vy^{i-2}$ rather than $\vy^{i-1}$.

\subsection{On the detectability condition}
The detectability condition on the optimal matrix pair in Theorem \ref{th:main} is needed for the lower bound. That is, we show that the convex optimization in \eqref{eq:th_main} is always an upper bound on the capacity, and is achievable when the condition is met. Recall that the detectability condition can be verified using the LMI optimization \eqref{eq:LMI_detec}. Based on extensive simulations, we conjecture the following.
\begin{conjecture}\label{conj:det}
The detectability condition in Theorem \ref{th:main} is redundant. That is, if $(\Gamma,\Pi,\hat{\Sigma})$ solves the optimization in \eqref{eq:th_main}, then $(F +G\Gamma \hat{\Sigma}^\dagger,H +J\Gamma \hat{\Sigma}^\dagger)$ is detectable.
\end{conjecture}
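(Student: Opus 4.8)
The plan is to argue by contradiction. Writing $\tilde{F}\triangleq F+G\Gamma\hat{\Sigma}^\dagger$ and $\tilde{H}\triangleq H+J\Gamma\hat{\Sigma}^\dagger$, suppose $(\Gamma,\Pi,\hat{\Sigma})$ is optimal for \eqref{eq:th_main} but $(\tilde{F},\tilde{H})$ fails to be detectable; I will then produce either an infeasibility or a strictly better feasible point. The first step is to expose the control-theoretic content of the constraints. As in the converse, $\hat{\Sigma},\Pi,\Gamma$ are the steady-state quantities $\cov(\vhs_i),\cov(\vx_i),\E[\vx_i\vhs_i^T]$, the first LMI is the joint covariance $\begin{pmatrix}\Pi&\Gamma\\\Gamma^T&\hat{\Sigma}\end{pmatrix}\succeq0$, and the optimal input splits as $\vx_i=\Gamma\hat{\Sigma}^\dagger\vhs_i+\mathbf{n}_i$ with $\cov(\mathbf{n}_i)=\Pi-\Gamma\hat{\Sigma}^\dagger\Gamma^T\succeq0$ its Schur complement. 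Substituting into Lemma \ref{lemma:encoder} gives the closed loop $\vhs_{i+1}=\tilde{F}\vhs_i+G\mathbf{n}_i+K_p\ve_i$ and $\vy_i=\tilde{H}\vhs_i+J\mathbf{n}_i+\ve_i$, from which one checks the identities $\Psi_{Y}=\cov(\vy_i)$, $\Omega+\hat{\Sigma}=\cov(\vhs_{i+1})$, and (taking the Schur complement of the second LMI in $\Psi_{Y}$) that this LMI is exactly the relaxed Riccati inequality $\cov(\vhs_{i+1}\mid\vy_i)\succeq\hat{\Sigma}$. Since $\Psi,K_p$ are fixed channel constants, the objective is increasing in $\Psi_{Y}$, and non-detectability is a right eigenvector $v$ with $\tilde{F}v=\lambda v$, $|\lambda|\ge1$, $\tilde{H}v=0$: a state direction invisible to the output.

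For the \textbf{scalar case} this is already decisive. If $\hat{\Sigma}=0$ then $\Gamma=0$ and $(\tilde{F},\tilde{H})=(F,H)$, which is detectable by Assumption \ref{ass:1}. If $\hat{\Sigma}>0$, scalar non-detectability means $|\tilde{F}|\ge1$ and $\tilde{H}=0$, i.e. $H\hat{\Sigma}+J\Gamma=0$. When $\Gamma=0$ this forces $H=0$ and $\tilde{F}=F$, so Assumption \ref{ass:1} makes $(F,0)$ detectable, giving $|F|<1$ and contradicting $|\tilde{F}|\ge1$. When $\Gamma\neq0$ (so $J\neq0$, $H\neq0$), I compare against the feasible point $(\Gamma',\Pi',\hat{\Sigma}')=(0,\Pi,0)$: its first LMI and power constraint hold trivially, and its second LMI collapses to $\Pi\Psi(G-K_pJ)^2/(J^2\Pi+\Psi)\ge0$. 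Its objective uses $\Psi_{Y}'=J^2\Pi+\Psi$, whereas the $\tilde{H}=0$ point has $\Psi_{Y}=J^2(\Pi-\Gamma^2/\hat{\Sigma})+\Psi<\Psi_{Y}'$; since the objective increases in $\Psi_{Y}$, the original point is strictly suboptimal, a contradiction. Hence every scalar optimizer is detectable (modulo the degenerate zero-capacity case $H=J=0$, where one simply selects the optimizer with $\Gamma=0$).

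For the \textbf{MIMO case} I would mimic this improvement by projecting out the offending subspace. Let $V$ be the $\tilde{F}$-invariant subspace spanned by the generalized eigenvectors associated with unobservable modes $|\lambda|\ge1$, and form the candidate $(\Gamma',\Pi',\hat{\Sigma}')$ obtained from $C=\begin{pmatrix}\hat{\Sigma}&\Gamma^T\\\Gamma&\Pi\end{pmatrix}$ by annihilating the $\hat{\Sigma}$- and $\Gamma$-components along $V$ while keeping $\Pi$ (the analogue of the scalar move $\hat{\Sigma}'=0,\Gamma'=0$). Because $V\subseteq\ker\tilde{H}$, I expect $\Psi_{Y}$ to be nondecreasing and $\mathbf{Tr}(\Pi)$ nonincreasing, with the first LMI preserved by the projection (a Schur-complement monotonicity). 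The remaining task is to show the \emph{coupled} second LMI survives the projection.

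The \textbf{main obstacle} is exactly this last verification. In the scalar reduction the unobservable subspace was the whole state and $\Pi$ was left untouched, so the Riccati LMI reduced to a manifestly nonnegative scalar; in MIMO the operators $\Omega$ and $K_{Y}\Psi_{Y}K_{Y}^T$ mix the $V$ and $V^\perp$ components through $F,G$ and the Kalman gain $K_p$, so zeroing the $V$-block perturbs the inequality through the cross-covariances. Establishing that the discrete Riccati operator is monotone under this projection, together with the bookkeeping for a singular $\hat{\Sigma}$ (the range condition $\Gamma=\Gamma\hat{\Sigma}^\dagger\hat{\Sigma}$), for Jordan blocks at $|\lambda|=1$, and for modes that are only partially unobservable, is the crux that I do not expect to dispatch by the elementary scalar computation. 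An appealing alternative route is to identify the program's optimum with the maximal solution of the associated Riccati equation, for which the closed loop is stabilizing (hence detectable) by classical Riccati theory; making this relaxation-to-maximality identification rigorous is, I believe, the most promising path to the full conjecture.
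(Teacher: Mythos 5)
Your scalar argument is correct and essentially the same as the paper's own proof: both proceed by contradiction, rewrite $\Psi_Y$ via the completed square $(H+J\Gamma\hat{\Sigma}^\dagger)\hat{\Sigma}(H+J\Gamma\hat{\Sigma}^\dagger)^T + J(\Pi-\Gamma\hat{\Sigma}^\dagger\Gamma^T)J^T+\Psi$, and observe that replacing the optimizer by one with $\Gamma=0$ strictly increases the objective when $J\neq0$ (handling the degenerate $H=J=0$ case identically), with your explicit verification that $(\Gamma',\Pi',\hat{\Sigma}')=(0,\Pi,0)$ satisfies the Riccati LMI --- reducing to $\Pi\Psi(G-K_pJ)^2/(J^2\Pi+\Psi)\ge0$ --- supplying a feasibility detail the paper dismisses as choosing ``a feasible $\hat{\Sigma}$.'' Since the MIMO statement is left as a conjecture in the paper as well, your candid admission that the subspace-projection step is unproven exactly matches the paper's own state of knowledge.
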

\begin{proof}[Proof of Conjecture \ref{conj:det} for the scalar model] Assume by contradiction that the optimal pair is not detectable, that is, $F +G\Gamma \hat{\Sigma}^\dagger > 1$ and $H+J\Gamma\hat{\Sigma}^\dagger=0$. By the detectability of $(F,H)$ in Assumption \ref{ass:1}, the optimal variables satisfy $\Gamma,\hat{\Sigma}\neq0$.
The proof now follows by writing $\Psi_Y$ as
\begin{align}
    (H+J\Gamma\hat{\Sigma}^\dagger)\hat{\Sigma}(H+J\Gamma\hat{\Sigma}^\dagger)^T + J(P - \Gamma\hat{\Sigma}^\dagger\Gamma^T)J^T + \Psi,\nn
\end{align}
where we used the fact that in the scalar case $\Pi = P$; if $H+J\Gamma\hat{\Sigma}^\dagger=0$, choosing $\Gamma=0$ (and a feasible $\hat{\Sigma}$) attains a larger objective for $J\neq0$ contradicting the assumed optimality. Note that if $J=0$, the assumption $H+J\Gamma\hat{\Sigma}^\dagger=0$ implies $H=0$ and the channel capacity is zero.
\end{proof}
\section{Proof of the main result}
In this section, we prove Theorem \ref{th:main}. The capacity derivation utilizes the finite-horizon optimization of directed information
\begin{align}\label{eq:DI_n}
C_n(P) \triangleq \max_{ P(\vx^n||\vy^n): \frac1{n}\sum_i \E[\vx^T_i\vx_i] \le P} I(\vx^n\to \vy^n).
\end{align}
Its relation to the feedback capacity is shown explicitly when needed. While we study directly the optimization problem in \eqref{eq:DI_n}, it is convenient to use the terminology of an encoder and a decoder to describe the solution to the problem. We will use two estimates
\begin{align}
    \vhs_i &= \E[\vs_i|\vx^{i-1},\vy^{i-1}]\nn\\
    \vhhs_i&= \E[\vhs_i|\vy^{i-1}],
\end{align}
where the first is referred to as the \emph{encoder's estimate} since at time $i$ the encoder has access to previous channel inputs and outputs, while the second estimate, $\vhhs_i$, is referred to as the decoders' estimate since it is based on channel outputs only.

\subsection{Optimal policy}
By Lemma \ref{lemma:encoder}, the channel outputs can be described as the linear dynamical system
\begin{align}\label{eq:proof_ss_1}
    \vhs_{i+1}&= F\vhs_i+ G\vx_i + K_{p,i} \ve_i \nn \\
    \vy_i&= H \vhs_i + J\vx_i +  \ve_i,
\end{align}
where the innovation process definition in Lemma \ref{lemma:encoder} was used as $H\vs_i + \vv_i = \ve_i + H\vhs_i$. Special cases of the state-space structure above appeared in   \cite{YangKavcicTatikondaGaussian,Kim10_Feedback_capacity_stationary_Gaussian,charalambous2020new,sabag_MIMO_isit,Elia_MIMO_ITW}. The next step is to show that the maximization domain can be simplified.

\begin{lemma}[The optimal policy structure]\label{lemma:policy}
For a fixed $n$, it is sufficient to optimize the directed information in \eqref{eq:DI_n} over inputs of the form
\begin{align}\label{eq:policy}
\vx_i&= \Gamma_i \hat{\Sigma}^\dagger_{i}(\vhs_{i}-\vhhs_{i}) + \vm_i,\ \ \ \  i=1,\dots,n
\end{align}
where $\vm_i\sim N(0,M_i)$ is independent of $(\vx^{i-1},\vy^{i-1})$, $\hat{\Sigma}^\dagger_{i}$ is the Moore-Penrose pseudo-inverse of
\begin{align}
    \hat{\Sigma}_{i}&= \cov( \vhs_{i}-\vhhs_{i}),
\end{align}
$\Gamma_i$ is a matrix that satisfies
\begin{align}\label{eq:lemma_policy_orthognality}
\Gamma_i(I - \hat{\Sigma}_{i}\hat{\Sigma}^\dagger_{i}) =0,
\end{align}
and the power constraint is
\begin{align}\label{eq:lemma_policy_pow}
    \frac1{n}\sum_{i=1}^n\mathbf{Tr}( \Gamma_i \hat{\Sigma}^\dagger_{i}\Gamma_i^T + M_i)\le P.
\end{align}
\end{lemma}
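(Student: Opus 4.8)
The plan is to reduce the optimization over arbitrary causally conditioned distributions $P(\vx^n\|\vy^n)$ to the claimed linear-Gaussian family in two movements: a second-order upper bound on the directed information that depends on the policy only through a triple of covariance matrices per time step, and a matching construction that realizes any feasible triple with the stated form. I will work throughout with the innovations representation \eqref{eq:proof_ss_1} and the decoder-side quantity $\tilde\vs_i\triangleq\vhs_i-\vhhs_i$.

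First I would rewrite the objective. Since $\ve_i$ is white and independent of $(\vx^i,\vy^{i-1})$ by Lemma \ref{lemma:encoder}, and $\vhs_i$ is a deterministic function of $(\vx^{i-1},\vy^{i-1})$, we have $h(\vy_i\mid\vx^i,\vy^{i-1})=h(\ve_i)$, a policy-independent constant, so
\[
I(\vx^n\to\vy^n)=\sum_{i=1}^n\big[h(\vy_i\mid\vy^{i-1})-h(\ve_i)\big].
\]
To bound $h(\vy_i\mid\vy^{i-1})$ I would use the best \emph{linear} predictor $\check\vy_i$ of $\vy_i$ from $\vy^{i-1}$: subtracting a function of $\vy^{i-1}$ leaves the conditional entropy unchanged, conditioning only reduces it, and the Gaussian maximizes entropy at fixed covariance, giving $h(\vy_i\mid\vy^{i-1})\le\tfrac12\log\big((2\pi e)^p\det\cov(\vy_i-\check\vy_i)\big)$, with equality when the pair is jointly Gaussian.

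The crux is to express $\cov(\vy_i-\check\vy_i)$ through second-order statistics alone. Writing $\tilde\vx_i$ for the linear-prediction error of $\vx_i$ from $\vy^{i-1}$, linearity of the predictor together with $\E[\ve_i\mid\vy^{i-1}]=0$ yields $\vy_i-\check\vy_i=H\tilde\vs_i+J\tilde\vx_i+\ve_i$, and since $\ve_i$ is uncorrelated with $(\tilde\vs_i,\tilde\vx_i)$,
\[
\cov(\vy_i-\check\vy_i)=H\hat\Sigma_iH^T+J\Pi_iJ^T+H\Gamma_i^TJ^T+J\Gamma_iH^T+\Psi_i,
\]
which is exactly $\Psi_Y$ with $\Pi_i=\cov(\tilde\vx_i)$, $\Gamma_i=\cov(\tilde\vx_i,\tilde\vs_i)$ and $\hat\Sigma_i=\cov(\tilde\vs_i)$. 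I would then note that $\hat\Sigma_i$ obeys a \emph{closed} linear Kalman/Riccati recursion driven only by $(\hat\Sigma_i,\Gamma_i,\Pi_i)$: the $\vy^{i-1}$-measurable part of $\vx_i$ is known to the decoder and cancels in the error dynamics, so the one-step prediction covariance $F\hat\Sigma_iF^T+G\Pi_iG^T+G\Gamma_iF^T+F\Gamma_i^TG^T+K_{p,i}\Psi_iK_{p,i}^T$ followed by the measurement update depends on the policy solely through this triple. Hence every policy satisfies $I(\vx^n\to\vy^n)\le\sum_i\tfrac12\log\frac{\det\Psi_{Y,i}}{\det\Psi_i}$, with the right-hand side determined entirely by the covariance sequence.

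Finally I would show attainability inside the claimed family. For $\vx_i=\Gamma_i\hat\Sigma_i^\dagger\tilde\vs_i+\vm_i$ with $\vm_i\sim N(0,M_i)$ independent of the past, $\E[\tilde\vs_i\mid\vy^{i-1}]=0$ forces the predictable part to vanish, so $\cov(\vx_i,\tilde\vs_i)=\Gamma_i\hat\Sigma_i^\dagger\hat\Sigma_i$; using that $\hat\Sigma_i^\dagger\hat\Sigma_i$ is the projector onto the range of $\hat\Sigma_i$, this equals $\Gamma_i$ precisely when $\Gamma_i(I-\hat\Sigma_i\hat\Sigma_i^\dagger)=0$, which is how the orthogonality condition \eqref{eq:lemma_policy_orthognality} enters. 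Moreover $\cov(\vx_i)=\Gamma_i\hat\Sigma_i^\dagger\Gamma_i^T+M_i$ realizes any $\Pi_i\succeq\Gamma_i\hat\Sigma_i^\dagger\Gamma_i^T$ via $M_i=\Pi_i-\Gamma_i\hat\Sigma_i^\dagger\Gamma_i^T\succeq0$ and costs power $\mathbf{Tr}(\Gamma_i\hat\Sigma_i^\dagger\Gamma_i^T+M_i)$, giving \eqref{eq:lemma_policy_pow}; since any predictable component of a general input only wastes power, the constraint is respected. As this policy is jointly Gaussian with $\vy^{i-1}$, the entropy bound holds with equality, so it meets the upper bound and the reduction is complete. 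I expect the main obstacle to be the middle step: justifying that replacing the true conditional-mean predictor by the linear predictor is simultaneously a valid upper bound for an arbitrary (non-Gaussian) policy and tight for the Gaussian construction, so that the error recursion closes in $(\hat\Sigma_i,\Gamma_i,\Pi_i)$ with no higher-order correlations helping, together with the care needed around $\hat\Sigma_i^\dagger$ when $\hat\Sigma_i$ is rank-deficient.
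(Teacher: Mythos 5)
Your proposal is correct and takes essentially the same route as the paper: like the paper's proof, you first establish that $h(\vy_i\mid\vx^i,\vy^{i-1})=\tfrac12\log\det\Psi_i+\tfrac{p}{2}\log(2\pi e)$ is policy-independent (the paper's \eqref{eq:cond_const}, which you obtain via the innovations statement of Lemma~\ref{lemma:encoder} rather than the paper's chain (a)--(c)), reducing the problem to maximizing the output entropy, and your second-moment/Gaussian-maximum-entropy reduction --- dropping the $\vy^{i-1}$-predictable part as wasted power, realizing any feasible triple $(\Gamma_i,\Pi_i,\hat\Sigma_i)$ with the structured input, and deriving the orthogonality condition from the projector $\hat\Sigma_i\hat\Sigma_i^\dagger$ --- is precisely the content of the step the paper delegates to \cite[Lemma 1]{sabag_MIMO_isit}. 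The obstacle you flag at the end is not a genuine gap: working with linear predictors throughout, the innovations $\vy_i-\check\vy_i$ are uncorrelated across time, so the linear-prediction-error covariance recursion closes exactly in the triple for arbitrary (non-Gaussian) policies, and coincides with the conditional-mean quantities in the jointly Gaussian construction, which is exactly the tightness you need.
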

\begin{proof}[Proof of Lemma \ref{lemma:policy}]
We show here that
\begin{align}\label{eq:cond_const}
    h(\vy_i|\vx^i,\vy^{i-1})&= \frac1{2}\log \det \Psi_i + \frac{p}{2}\log (2\pi e),
\end{align}
for any inputs distribution $P(\vx^n\|\vy^n)$, where $\Psi_i$ is the constant given in Lemma \ref{lemma:encoder}. This implies that the optimization of the directed information in \eqref{eq:DI_n} boils down to maximizing the entropy rate of channel outputs. The proof then follows from \cite[Lemma $1$]{sabag_MIMO_isit}, whose details are omitted here.

Consider the following chain of equalities
\begin{align}
&h(\vy_i|\vx^i,\vy^{i-1})= h(H \vs_i + \vv_i |\vx^i,\vy^{i-1})\nn\\
&\stackrel{(a)}= h(H \vs_i + \vv_i |\vx^{i-1},\vy^{i-1})\nn\\
&\stackrel{(b)}= h_{\mathbb G}(H \vs_i + \vv_i |\vx^{i-1},\vy^{i-1})\nn\\
&\stackrel{(c)} = \frac1{2} \log\det( H\text{cov}(\vs_i - \vhs_i) H^T + V) + \frac{p}{2}\log (2\pi e)\nn\\
&= \frac1{2} \log\det \Psi_i+ \frac{p}{2}\log (2\pi e),
\end{align}
where $(a)$ follows from the Markov chain $\vs_i - (\vx^{i-1},\vy^{i-1})-\vx_i$ and the fact that $\vv_i$ is independent from past occurrences, $(b)$ follows from the fact that $\vs_i$ conditioned on $(\vx^{i-1},\vy^{i-1})$ is an affine function of the Gaussian vectors $\vw^{t}$ and $\vs_0$, and $(c)$ follows from the Gaussian distribution of $\vs_i|\vx^{i-1},\vy^{i-1}$ and the fact that $\vv_i$ is independent from $(\vx^{i-1},\vy^{i-1})$.
\end{proof}
\subsection{The decoder estimate}
Using the optimal policy in Lemma \ref{lemma:policy}, we can write \eqref{eq:proof_ss_1} as
\begin{align}\label{eq:ss_dec}
    \vhs_{i+1}&= (F +G\Gamma_i \hat{\Sigma}_i^\dagger) \vhs_i - G\Gamma_i \hat{\Sigma}_i^\dagger \vhhs_i + G\vm_i + K_{p,i} \ve_i\nn\\
    \vy_i&= (H +J\Gamma_i \hat{\Sigma}_i^\dagger)\vhs_i - J\Gamma_i \hat{\Sigma}_i^\dagger \vhhs_i + J \vm_i +  \ve_i.
\end{align}
The Kalman filter analysis of \eqref{eq:ss_dec} is given as follows.
\begin{lemma}\label{lemma:ss_dec}
For the state space in \eqref{eq:ss_dec}, the optimal estimator is given by
\begin{align}
    \vhhs_{i+1}
    &= F \vhhs_i  + K_{Y,i}(\vy_i  - H\vhhs_i),
\end{align}
with $\vhhs_1=0$, its error covariance $\hat{\Sigma}_i\triangleq \cov (\vhs_i-\vhhs_i)$ is given by
\begin{align}\label{eq:lemma_riccati_recur}
     \hat{\Sigma}_{i+1}&= (F + G\Gamma_i \hat{\Sigma}_i^\dagger) \hat{\Sigma}_i (F + G\Gamma_i \hat{\Sigma}_i^\dagger)^T \nn\\
    &\ \ + GM_i G^T + K_{p,i} \Psi_i K_{p,i}^T  - K_{Y,i} \Psi_{Y,i} K_{Y,i}^T,
\end{align}
with $\hat{\Sigma}_1=0$ and the constants
\begin{align}\label{eq:lemma_ss_y_psiK}
\Psi_{Y,i}&= \mspace{-3mu}(H \mspace{-3mu}+ \mspace{-3mu} J\Gamma_i \hat{\Sigma}_i^\dagger) \hat{\Sigma}_i (H \mspace{-3mu}+\mspace{-3mu}J\Gamma_i \hat{\Sigma}_i^\dagger)^T + J M_i J^T + \Psi_i\nn\\
K_{Y,i} &= \left[(F +G\Gamma_i \hat{\Sigma}_i^\dagger)\hat{\Sigma}_i(H +J\Gamma_i \hat{\Sigma}_i^\dagger)^T \right.\nn\\
& \ \ \ \ \ \ \left.+ G M_iJ^T + K_{p,i}\Psi_i \right]\Psi_{Y,i}^{-1}.
\end{align}
\end{lemma}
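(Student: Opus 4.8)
\emph{Proof proposal (plan).} The plan is to read \eqref{eq:ss_dec} as a linear Gaussian state-space model with hidden state $\vhs_i$, observation $\vy_i$, driving noises $G\vm_i+K_{p,i}\ve_i$ (process) and $J\vm_i+\ve_i$ (measurement), and a known exogenous input $\vhhs_i$ that is $\vy^{i-1}$-measurable; the assertion is then precisely the predictor form of the Kalman filter for this model. Because the decoder's estimate $\vhhs_i=\E[\vhs_i|\vy^{i-1}]$ is the conditional mean, I would first argue that it is linear in $\vy^{i-1}$ and hence given by the Kalman recursion. This needs joint Gaussianity of $(\vhs_i,\vy^{i-1})$, which I would obtain by induction on $i$ starting from the deterministic $\vhs_1=0$: the policy \eqref{eq:policy} is affine in $(\vhs_i,\vhhs_i,\vm_i)$ and, under the inductive hypothesis, $\vhhs_i$ is affine in $\vy^{i-1}$, so each new pair $(\vy_i,\vhs_{i+1})$ is an affine image of jointly Gaussian variables, closing the induction.

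Next I would pin down the noise statistics. By Lemma \ref{lemma:encoder}, $\ve_i$ is white, $\ve_i\sim N(0,\Psi_i)$, and independent of $(\vx^i,\vy^{i-1})$, while by \eqref{eq:policy}, $\vm_i\sim N(0,M_i)$ is independent of $(\vx^{i-1},\vy^{i-1})$; since $\vm_i$ is a deterministic function of $(\vx^i,\vy^{i-1})$, it is independent of both $\ve_i$ and $\ve^{i-1}$, so the stacked driving noise is white with process covariance $Q_i=GM_iG^T+K_{p,i}\Psi_iK_{p,i}^T$, measurement covariance $R_i=JM_iJ^T+\Psi_i$, and cross-covariance $S_i=GM_iJ^T+K_{p,i}\Psi_i$. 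Writing $A_i=F+G\Gamma_i\hat{\Sigma}_i^\dagger$ and $C_i=H+J\Gamma_i\hat{\Sigma}_i^\dagger$, the standard predictor-form Kalman gain and Riccati update for correlated noise read
\begin{align}
\Psi_{Y,i}&=C_i\hat{\Sigma}_iC_i^T+R_i,\nn\\
K_{Y,i}&=\left(A_i\hat{\Sigma}_iC_i^T+S_i\right)\Psi_{Y,i}^{-1},\nn\\
\hat{\Sigma}_{i+1}&=A_i\hat{\Sigma}_iA_i^T+Q_i-K_{Y,i}\Psi_{Y,i}K_{Y,i}^T,\nn
\end{align}
and substituting the above expressions for $A_i,C_i,Q_i,R_i,S_i$ reproduces \eqref{eq:lemma_riccati_recur}--\eqref{eq:lemma_ss_y_psiK}. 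The filter collapses to the claimed form $\vhhs_{i+1}=F\vhhs_i+K_{Y,i}(\vy_i-H\vhhs_i)$ because the known-input terms cancel: conditioning \eqref{eq:ss_dec} on $\vy^{i-1}$ and using $\E[\vm_i|\vy^{i-1}]=\E[\ve_i|\vy^{i-1}]=0$ gives $\E[\vhs_{i+1}|\vy^{i-1}]=(F+G\Gamma_i\hat{\Sigma}_i^\dagger)\vhhs_i-G\Gamma_i\hat{\Sigma}_i^\dagger\vhhs_i=F\vhhs_i$ and $\E[\vy_i|\vy^{i-1}]=H\vhhs_i$, so the innovation is $\vy_i-H\vhhs_i$ and the one-step prediction of the state is $F\vhhs_i$. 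I would also record the initial data $\vhhs_1=\E[\vhs_1]=0$ and $\hat{\Sigma}_1=\cov(\vhs_1)=0$, matching $\vhs_1=0$.

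The step I expect to be the main obstacle is justifying these formulas when $\hat{\Sigma}_i$ is singular, which it generically is (indeed $\hat{\Sigma}_1=0$), so that $\hat{\Sigma}_i^\dagger$ is a genuine pseudo-inverse rather than an ordinary inverse. The orthogonality constraint \eqref{eq:lemma_policy_orthognality}, $\Gamma_i(I-\hat{\Sigma}_i\hat{\Sigma}_i^\dagger)=0$, is exactly what resolves this: since $\hat{\Sigma}_i$ is symmetric, $\hat{\Sigma}_i\hat{\Sigma}_i^\dagger=\hat{\Sigma}_i^\dagger\hat{\Sigma}_i$ is the orthogonal projector onto the range of $\hat{\Sigma}_i$, and \eqref{eq:lemma_policy_orthognality} yields the identity $\Gamma_i\hat{\Sigma}_i^\dagger\hat{\Sigma}_i=\Gamma_i$. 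Hence $A_i\hat{\Sigma}_i=F\hat{\Sigma}_i+G\Gamma_i$ and $C_i\hat{\Sigma}_i=H\hat{\Sigma}_i+J\Gamma_i$ are independent of the choice of generalized inverse, and likewise all products $A_i\hat{\Sigma}_iC_i^T$, $A_i\hat{\Sigma}_iA_i^T$, and $\Gamma_i\hat{\Sigma}_i^\dagger\Gamma_i^T$ entering the gain, the Riccati map, and the power bound \eqref{eq:lemma_policy_pow} are well-defined. I would therefore run the entire induction on the range space of $\hat{\Sigma}_i$, where the error $\vhs_i-\vhhs_i$ lives, reducing the singular case to the classical nonsingular derivation.
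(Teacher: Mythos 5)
Your proposal is correct and is essentially the paper's own argument spelled out: the paper's proof is precisely the remark that Lemma \ref{lemma:ss_dec} is a standard Kalman filtering analysis of \eqref{eq:ss_dec} with the $\vy^{i-1}$-measurable input $\vhhs_i$, whose terms cancel in the estimate update (leaving $F$ and $H$ there) while the error-covariance recursion runs in the closed-loop matrices $F+G\Gamma_i\hat{\Sigma}_i^\dagger$, $H+J\Gamma_i\hat{\Sigma}_i^\dagger$ with the correlated noises $G\vm_i+K_{p,i}\ve_i$ and $J\vm_i+\ve_i$, exactly as you set up. Your additional care with the joint-Gaussianity induction and the pseudo-inverse identity $\Gamma_i\hat{\Sigma}_i^\dagger\hat{\Sigma}_i=\Gamma_i$ from \eqref{eq:lemma_policy_orthognality} correctly fills in details the paper omits for space.
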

The proof is a standard Kalman filtering analysis with the slight modification of the appearance of the variable $\vhhs_i$ which is a function of $\vy^{i-1}$. This modification has no effect on the error covariance $\hat{\Sigma}_i$, but it affects the recursive update of $\vhhs_i$. By \eqref{eq:ss_dec} and Lemma \ref{lemma:ss_dec}, it can be concluded that the measurements entropy rate is
\begin{align}\label{eq:dec_entr}
    h(\vy^n)&= \frac1{2}\sum_{i=1}^n \log\det \Psi_{Y,i} + \frac{p}{2}\log (2\pi e).
\end{align}

\subsection{Upper bound}
\begin{lemma}\label{lemma:scop}
The maximal directed information in \eqref{eq:DI_n} is upper bounded by the convex optimization problem
\begin{align}\label{eq:lemma_SCOP}
    &C_n(P)\le \max_{\{\Gamma_i,\Pi_i,\hat{ \Sigma}_{i+1}\}_{i=1}^n} \ \ \frac1{2}\sum_{i=1}^n\log \det (\Psi_{Y,i}) - \log\det(\Psi_i)\nn\\
&\ \ \ \ \ s.t. \ \ \ \begin{pmatrix}
    \Pi_i & \Gamma_i\\
    \Gamma_i^T& \hat{ \Sigma}_i
\end{pmatrix} \succeq0, \ \ \frac1{n}\sum_{i=1}^n\mathbf{Tr}(\Pi_i)\le P,  \nn\\
&\Psi_{Y,i}= H \hat{\Sigma}_i H^T + J  \Pi_i J^T +H \Gamma_i^T J^T + J\Gamma_i H^T + \Psi_i \nn \\
&\begin{pmatrix}
     \Omega_i & K_{Y,i}\Psi_{Y,i} \\
    \Psi_{Y,i} K_{Y,i}^T & \Psi_{Y,i}
    \end{pmatrix}\succeq0,\nn\\
& K_{Y,i} \Psi_{Y,i}= F \hat{\Sigma}_iH^T \mspace{-3mu}+ \mspace{-3mu} F \Gamma_i^T J^T \mspace{-3mu}+ \mspace{-3mu} G\Gamma_i H^T \mspace{-3mu}+ \mspace{-3mu} G \Pi_i J^T \mspace{-3mu}+ \mspace{-3mu} K_{p,i}\Psi_i \nn \\
& \Omega_i = F\hat{\Sigma}_iF^T - \hat{\Sigma}_{i+1}+ G\Gamma_i F^T + F\Gamma_i^TG^T  + G \Pi_i G^T \nn\\
&\ \ \ \ \ + K_{p,i} \Psi_i K_{p,i}^T,
\end{align}
where the constraints hold for $i=1,\dots,n$ and $\hat{\Sigma}_1= 0$.
\end{lemma}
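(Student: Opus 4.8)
The plan is to exhibit \eqref{eq:lemma_SCOP} as a convex \emph{relaxation} of the policy optimization furnished by Lemma \ref{lemma:policy}. Concretely, I will show that every admissible policy induces a feasible point of \eqref{eq:lemma_SCOP} achieving the same objective value; since \eqref{eq:lemma_SCOP} maximizes over a larger set, this yields $C_n(P)\le$ (convex optimum) at once, with no need to argue tightness.

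First I would evaluate the directed information under the structure of Lemma \ref{lemma:policy}. The chain rule gives $h(\vy^n)=\sum_i h(\vy_i\mid\vy^{i-1})$, which by \eqref{eq:dec_entr} equals $\frac12\sum_i\log\det\Psi_{Y,i}+\frac{np}{2}\log(2\pi e)$, while \eqref{eq:cond_const} gives $h(\vy_i\mid\vx^i,\vy^{i-1})=\frac12\log\det\Psi_i+\frac p2\log(2\pi e)$. Subtracting,
\[
I(\vx^n\to\vy^n)=\frac{1}{2}\sum_{i=1}^n\bigl(\log\det\Psi_{Y,i}-\log\det\Psi_i\bigr),
\]
which is exactly the objective of \eqref{eq:lemma_SCOP}. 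By Lemma \ref{lemma:policy} it therefore suffices to optimize this over $(\Gamma_i,M_i)$, with $\hat{\Sigma}_i$, $\Psi_{Y,i}$, $K_{Y,i}$ evolving according to Lemma \ref{lemma:ss_dec}.

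Next I would change variables to $\Pi_i\triangleq\cov(\vx_i)=\Gamma_i\hat{\Sigma}_i^\dagger\Gamma_i^T+M_i$, so the power constraint \eqref{eq:lemma_policy_pow} becomes the linear $\frac1n\sum_i\mathbf{Tr}(\Pi_i)\le P$. The generalized Schur complement then identifies the first block LMI with the three conditions $\hat{\Sigma}_i\succeq0$, the range condition $(I-\hat{\Sigma}_i\hat{\Sigma}_i^\dagger)\Gamma_i^T=0$ (which is precisely \eqref{eq:lemma_policy_orthognality}), and the residual $\Pi_i-\Gamma_i\hat{\Sigma}_i^\dagger\Gamma_i^T=M_i\succeq0$. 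Substituting $M_i=\Pi_i-\Gamma_i\hat{\Sigma}_i^\dagger\Gamma_i^T$ into \eqref{eq:lemma_ss_y_psiK} and applying the orthogonality identity $\Gamma_i\hat{\Sigma}_i^\dagger\hat{\Sigma}_i=\Gamma_i$, all pseudoinverse terms cancel and $\Psi_{Y,i}$ and $K_{Y,i}\Psi_{Y,i}$ collapse to the affine expressions in \eqref{eq:lemma_SCOP}. Expanding $(F+G\Gamma_i\hat{\Sigma}_i^\dagger)\hat{\Sigma}_i(\cdot)^T+GM_iG^T$ in the Riccati recursion \eqref{eq:lemma_riccati_recur} the same way rewrites it as the \emph{equality} $\Omega_i=K_{Y,i}\Psi_{Y,i}K_{Y,i}^T$. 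Relaxing this to $\Omega_i\succeq K_{Y,i}\Psi_{Y,i}K_{Y,i}^T$---the Schur complement of the second block LMI, valid since $\Psi_{Y,i}\succ0$---only enlarges the feasible set, so the Kalman tuple $(\Gamma_i,\Pi_i,\hat{\Sigma}_{i+1})$ is feasible with matching objective, completing the upper bound.

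I expect the main obstacle to be the careful handling of the Moore--Penrose pseudoinverse when $\hat{\Sigma}_i$ is rank-deficient, which it is at the initialization $\hat{\Sigma}_1=0$ and possibly thereafter. The orthogonality constraint \eqref{eq:lemma_policy_orthognality} is exactly the hypothesis needed for the identities $\hat{\Sigma}_i^\dagger\hat{\Sigma}_i\hat{\Sigma}_i^\dagger=\hat{\Sigma}_i^\dagger$ and $\Gamma_i\hat{\Sigma}_i^\dagger\hat{\Sigma}_i=\Gamma_i$ to hold, and hence for both the generalized Schur-complement equivalences and the term cancellations in $\Psi_{Y,i}$, $K_{Y,i}\Psi_{Y,i}$, and $\Omega_i$; making these manipulations rigorous in the singular regime, rather than silently assuming invertibility, is the delicate part of the argument.
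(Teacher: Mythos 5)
Your proposal is correct and follows essentially the same route as the paper's proof: the change of variables $\Pi_i=\Gamma_i\hat{\Sigma}_i^\dagger\Gamma_i^T+M_i$, the orthogonality-driven cancellations in $\Psi_{Y,i}$ and $K_{Y,i}\Psi_{Y,i}$, relaxing the Riccati equality $\Omega_i=K_{Y,i}\Psi_{Y,i}K_{Y,i}^T$ to an inequality rendered as an LMI via the Schur complement (valid since $\Psi_{Y,i}\succ0$), and absorbing $M_i\succeq0$ together with \eqref{eq:lemma_policy_orthognality} into the single block LMI via the generalized Schur complement. Your explicit treatment of the rank-deficient case (noting that \eqref{eq:lemma_policy_orthognality} is exactly the range condition needed for the generalized Schur-complement equivalence) is a point the paper leaves implicit, but it is an elaboration of the same argument rather than a different one.
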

\begin{proof}[Proof of Lemma \ref{lemma:scop}]
Using \eqref{eq:dec_entr} and \eqref{eq:cond_const}, the directed information for policies of the form \eqref{eq:policy} can be written as
\begin{align}
    C_n(P)&= \max \frac1{2}\sum_{i=1}^n \log\det(\Psi_{Y,i}) - \log\det \Psi_i,
\end{align}
where the maximization domain is $\{\Gamma_i,M_i\succeq0\}_{i=1}^n$ subject to the constraints in \eqref{eq:lemma_policy_orthognality}-\eqref{eq:lemma_policy_pow}, the Riccati recursion in \eqref{eq:lemma_riccati_recur} and $\Psi_{Y,i},K_{Y,i}$ are given in \eqref{eq:lemma_ss_y_psiK}.

The first step is to define the auxiliary decision variable $$\Pi_i \triangleq \Gamma_i \hat{\Sigma}^\dagger_{i}\Gamma_i^T + M_i.$$ Using the orthogonality constraint in \eqref{eq:lemma_policy_orthognality}, the variables in \eqref{eq:lemma_ss_y_psiK} can be expressed as
\begin{align}
    \Psi_{Y,i} &\mspace{-2mu}= \mspace{-2mu}H \hat{\Sigma}_i H^T + J  \Pi_i J^T +H \Gamma_i^T J^T + J\Gamma_i H^T + \Psi_i\nn\\
    K_{Y,i}\Psi_{Y,i} &\mspace{-2mu}=\mspace{-2mu} F \hat{\Sigma}_iH^T \mspace{-3mu}+ \mspace{-3mu}F \Gamma_i^T J^T \mspace{-3mu}+ \mspace{-3mu} G\Gamma_i H^T \mspace{-3mu}+ \mspace{-3mu} G \Pi_i J^T \mspace{-3mu}+ \mspace{-3mu} K_{p,i}\Psi_i,\nn
\end{align}
and the Riccati recursion in \eqref{eq:lemma_riccati_recur} can be expressed as
\begin{align}\label{eq:proof_scop_recursion}
     0&= \Omega_i - K_{Y,i} \Psi_{Y,i} K_{Y,i}^T,
\end{align}
with
\begin{align}
    \Omega_i&\triangleq F\hat{\Sigma}_iF^T - \hat{\Sigma}_{i+1}+ G\Gamma_i F^T + F\Gamma_i^TG^T  + G \Pi_i G^T \nn\\
    &\ \ \ + K_{p,i} \Psi_i K_{p,i}^T.
\end{align}
We now relax \eqref{eq:proof_scop_recursion} into an inequality and, due to $\Psi_{Y,i}\succ0$, that inequality is equivalent using a Schur complement to
\begin{align}\label{eq:proof_scop_RiccLMI}
    \begin{pmatrix}
    \Omega_i& K_{Y,i}\Psi_{Y,i}\\
    \Psi_{Y,i}K_{Y,i}^T& \Psi_{Y,i}
    \end{pmatrix}\succeq0.
\end{align}
Note that the variable $M_i\succeq0$ appears only in the constraint $\Pi_i =\Gamma_i \hat{\Sigma}^\dagger_{i}\Gamma_i^T + M_i$, so the constraint can be reduced to the inequality $\Pi_i -\Gamma_i \hat{\Sigma}^\dagger_{i}\Gamma_i^T \succeq0$. Finally, we use the Schur complement for the positive semidefinite matrix $\hat{\Sigma}_i\succeq0$ along with the inequality $\Pi_i -\Gamma_i \hat{\Sigma}^\dagger_{i}\Gamma_i^T \succeq0$ and the orthogonality constraint in \eqref{eq:lemma_policy_orthognality} to combine them into a single LMI
\begin{align}
        \begin{pmatrix}
    \Pi_i& \Gamma_i\\
    \Gamma_i^T& \hat{\Sigma}_i
    \end{pmatrix}\succeq0.
\end{align}
\end{proof}
The upper bound in Theorem~\ref{th:main} is shown next.
\begin{theorem}[Feedback capacity upper bound]\label{th:UB}
The feedback capacity is upper bounded by the convex optimization
\begin{align}\label{eq:lemma_UB}
    &C_{fb}(P)\le \max_{\Gamma,\Pi,\hat{\Sigma}} \ \ \frac1{2}\log \det (\Psi_{Y}) - \frac1{2}\log\det(\Psi)\nn\\
&\ \text{s.t.} \ \begin{pmatrix}
     \Omega& K_{Y}\Psi_{Y} \\
    \Psi_{Y} K_{Y}^T & \Psi_{Y}
    \end{pmatrix}\succeq0, \begin{pmatrix}
    \Pi & \Gamma\\
    \Gamma^T& \hat{\Sigma}
\end{pmatrix} \succeq0, \ \mathbf{Tr}(\Pi)\le P, \nn\\
&\Psi_{Y}= H \hat{\Sigma} H^T + J  \Pi J^T +H \Gamma^T J^T + J\Gamma H^T + \Psi  \\
& K_{Y}= (F \hat{\Sigma}H^T + F \Gamma^T J^T + G\Gamma H^T + G \Pi J^T + K_{p}\Psi) \Psi_{Y}^{-1}\nn \\
& \Omega = F \hat{\Sigma} F^T + G \Pi G^T  + G\Gamma F^T  +  F \Gamma^TG^T + K_p \Psi K_p^T - \hat{ \Sigma}.\nn
\end{align}
\end{theorem}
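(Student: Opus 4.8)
The plan is to pass from the finite-horizon bound of Lemma~\ref{lemma:scop} to the single-letter (stationary) bound by a time-averaging and convexity argument. Using the standard converse relation, the feedback capacity obeys $C_{fb}(P)\le\limsup_{n\to\infty}\frac1n C_n(P)$, so it suffices to show that the normalized value of the optimization in \eqref{eq:lemma_SCOP} does not exceed, in the limit, the value of the stationary optimization in \eqref{eq:lemma_UB}. First I would fix $n$, take an optimal sequence $\{\Gamma_i,\Pi_i,\hat{\Sigma}_{i+1}\}_{i=1}^n$ for \eqref{eq:lemma_SCOP}, and introduce the Ces\`aro averages $\bar\Gamma=\frac1n\sum_{i=1}^n\Gamma_i$, $\bar\Pi=\frac1n\sum_{i=1}^n\Pi_i$ and $\bar{\hat\Sigma}=\frac1n\sum_{i=1}^n\hat{\Sigma}_i$, which serve as the candidate maximizer for the stationary problem.

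The feasibility of the averaged point is where convexity enters. After the change of variable to $K_{Y,i}\Psi_{Y,i}$, every block-matrix constraint in \eqref{eq:lemma_SCOP} is affine in $(\Gamma_i,\Pi_i,\hat{\Sigma}_i,\hat{\Sigma}_{i+1})$, so averaging preserves positive semidefiniteness, i.e. $\frac1n\sum_i(\text{block})_i\succeq0$. For $\begin{pmatrix}\Pi_i & \Gamma_i\\ \Gamma_i^T & \hat{\Sigma}_i\end{pmatrix}\succeq0$ this gives $\begin{pmatrix}\bar\Pi & \bar\Gamma\\ \bar\Gamma^T & \bar{\hat\Sigma}\end{pmatrix}\succeq0$ directly, and likewise $\mathbf{Tr}(\bar\Pi)\le P$. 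For the Riccati LMI, averaging $\Omega_i$ telescopes the term $-\hat{\Sigma}_{i+1}$: since $\hat{\Sigma}_1=0$, one has $\frac1n\sum_i\bigl(F\hat{\Sigma}_iF^T-\hat{\Sigma}_{i+1}\bigr)=F\bar{\hat\Sigma}F^T-\bar{\hat\Sigma}-\frac1n\hat{\Sigma}_{n+1}$. The boundary term $\frac1n\hat{\Sigma}_{n+1}\succeq0$ enters with the favorable sign, so the stationary top-left block $\bar\Omega$ dominates $\frac1n\sum_i\Omega_i$; enlarging the top-left block of a positive semidefinite block matrix preserves the inequality, so the stationary Riccati LMI holds at the averaged point up to the deviation of the time-varying constants $\Psi_i,K_{p,i}$ from their limits.

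Controlling those constants is exactly where Assumption~\ref{ass:1} is used: it guarantees $\Sigma_i\to\Sigma$, hence $\Psi_i\to\Psi$ and $K_{p,i}\to K_p$, so by a Ces\`aro argument $\frac1n\sum_i\Psi_i\to\Psi$, $\frac1n\sum_i K_{p,i}\Psi_i K_{p,i}^T\to K_p\Psi K_p^T$ and $\frac1n\sum_i K_{p,i}\Psi_i\to K_p\Psi$. For the objective I would use concavity of $\log\det$ (Jensen) to write $\frac1n\sum_i\log\det\Psi_{Y,i}\le\log\det\bigl(\frac1n\sum_i\Psi_{Y,i}\bigr)$, while $\frac1n\sum_i\log\det\Psi_i\to\log\det\Psi$; since $\Psi_{Y,i}$ is affine in the decision variables, $\frac1n\sum_i\Psi_{Y,i}$ converges to $\Psi_Y$ evaluated at the averaged variables. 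The averaged variables are bounded—$\bar\Pi$ by the trace constraint, and $\bar{\hat\Sigma},\bar\Gamma$ through the first LMI—so along a subsequence $(\bar\Gamma,\bar\Pi,\bar{\hat\Sigma})$ converges to a limit point that is feasible for \eqref{eq:lemma_UB} and whose stationary objective is at least $\limsup_n\frac1n C_n(P)$, which yields the claim.

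I expect the main obstacle to be the simultaneous interchange of the $n\to\infty$ limit with the Jensen bound and the constraint-averaging: one must control the boundary term $\frac1n\hat{\Sigma}_{n+1}$ and the Ces\`aro deviations of $\Psi_i,K_{p,i}$ at once, and extract a convergent subsequence of the averaged variables by a compactness argument before concluding feasibility of the limit for the stationary LMIs. The favorable sign of the telescoped boundary term is what lets the Riccati inequality survive the relaxation, so verifying that sign together with the positive semidefiniteness bookkeeping for enlarging the top-left block is the technical crux.
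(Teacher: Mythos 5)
Your overall route is exactly the paper's: Fano's inequality gives $nR\le C_n(P)+\epsilon_n$, and the finite-horizon bound of Lemma~\ref{lemma:scop} is then single-letterized by Ces\`aro-averaging the decision variables, using that after the change of variable the LMIs are affine in $(\Gamma_i,\Pi_i,\hat{\Sigma}_i,\hat{\Sigma}_{i+1})$, telescoping the $-\hat{\Sigma}_{i+1}$ term (your sign check is right: with $\hat{\Sigma}_1=0$ the boundary term $\tfrac1n\hat{\Sigma}_{n+1}\succeq0$ enters favorably, and enlarging the $(1,1)$ block of a PSD block matrix preserves PSD-ness), applying Jensen to $\log\det$, and invoking Assumption~\ref{ass:1} for $\Psi_i\to\Psi$, $K_{p,i}\to K_p$. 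These are precisely the steps the paper compresses into its citation of \cite[Lemma 4]{sabag_MIMO_isit}.

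The genuine gap is your compactness step. You claim $\bar{\hat{\Sigma}}$ is bounded ``through the first LMI,'' but $\begin{pmatrix}\Pi&\Gamma\\ \Gamma^T&\hat{\Sigma}\end{pmatrix}\succeq0$ places no upper bound on $\hat{\Sigma}$: it sits in the lower-right block, so the constraint only forces $\hat{\Sigma}\succeq0$ and controls $\Gamma$ \emph{given} bounds on $\Pi$ and $\hat{\Sigma}$. Nor is a bound automatic from the Riccati LMI, which only yields $\hat{\Sigma}_{i+1}\preceq F\hat{\Sigma}_iF^T+G\Gamma_iF^T+F\Gamma_i^TG^T+G\Pi_iG^T+K_{p,i}\Psi_iK_{p,i}^T$; since Assumption~\ref{ass:1} requires only detectability of $(F,H)$ and explicitly permits unstable $F$, feasible $\hat{\Sigma}_i$ for the relaxed problem can grow geometrically in $i$, so the averages $\bar{\hat{\Sigma}}$ need not stay in a compact set as $n\to\infty$. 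Without a uniform-in-$n$ bound, you can neither extract the convergent subsequence nor pass from feasibility of the averaged point for the problem with Ces\`aro-averaged constants $(\bar{\Psi}_n,\overline{K_{p}\Psi}_n)$ to feasibility for the limit problem, which is the value-continuity step your argument silently uses. Closing this requires a real argument---e.g., showing that near-optimal finite-horizon solutions can be taken with uniformly bounded $\hat{\Sigma}_i$, or a direct perturbation bound on the value of the stationary program---and this is exactly the technical content the paper hides behind ``similar steps to \cite[Lemma 4]{sabag_MIMO_isit}.'' The rest of your outline (affinity of the constraints, the telescoping sign, the Jensen inequality, and the Ces\`aro treatment of $\Psi_i$, $K_{p,i}$) is sound.
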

\begin{proof}[Proof of Theorem \ref{th:UB}]
Let $R$ be an achievable rate achieved by a sequence of codebooks, and consider the upper bound
\begin{align}\label{eq:proof_UB_fano}
    nR &= H(M)\nn\\
    &\stackrel{(a)}\le \sum_{i=1}^n I(\vx^n\to \vy^n) + \epsilon_n\nn\\
    &\stackrel{(b)}\le C_n(P) + \epsilon_n,
\end{align}
where $(a)$ follows from Fano's inequality with $\epsilon_n\to0$ and the fact that $\vx_i$ is a function of the message $M$ and the feedback tuple $\vy^{i-1}$, and $(b)$ follows from the definition of $C_n(P)$ in \eqref{eq:DI_n} and by taking a maximum over all power-admissible channel inputs. We can now utilize the upper bound on $C_n(P)$ in Lemma \ref{lemma:scop} to derive the single-letter upper bound in Theorem~\ref{th:UB}. As the proof follows similar steps to \cite[Lemma $4$]{sabag_MIMO_isit}, it is omitted here due to space limitations.

\end{proof}

\subsection{Achievability (Lower bound)}
\begin{lemma}[Lower bound]\label{lemma:achievable}
The following optimization is a lower bound to the feedback capacity
\begin{align}
    &C_{fb}(P)\ge \max_{\Gamma,\Pi,\hat{\Sigma}} \ \ \frac1{2}\log \det (\Psi_{Y}) - \frac1{2}\log\det(\Psi)\nn\\
&\ \text{s.t.} \ \begin{pmatrix}
     \Omega& K_{Y}\Psi_{Y} \\
    \Psi_{Y} K_{Y}^T & \Psi_{Y}
    \end{pmatrix}\succeq0, \begin{pmatrix}
    \Pi & \Gamma\\
    \Gamma^T& \hat{\Sigma}
\end{pmatrix} \succeq0, \ \mathbf{Tr}(\Pi)\le P, \nn\\
&\Psi_{Y}= H \hat{\Sigma} H^T + J  \Pi J^T +H \Gamma^T J^T + J\Gamma H^T + \Psi  \\
& K_{Y}= (F \hat{\Sigma}H^T + F \Gamma^T J^T + G\Gamma H^T + G \Pi J^T + K_{p}\Psi) \Psi_{Y}^{-1}\nn \\
& \Omega = F \hat{\Sigma} F^T + G \Pi G^T  + G\Gamma F^T  +  F \Gamma^TG^T + K_p \Psi K_p^T - \hat{ \Sigma}.\nn\\
 &\exists K: \rho(F +G\Gamma \hat{\Sigma}^\dagger - K(H +J\Gamma \hat{\Sigma}^\dagger))<1. \label{eq:constraint_detec}
\end{align}
\end{lemma}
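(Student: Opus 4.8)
The plan is to combine an extension of the Cover--Pombra achievability scheme \cite{Cover89} with a convergence argument for the decoder's Riccati recursion \eqref{eq:lemma_riccati_recur}. The Cover--Pombra argument reduces the achievability claim to exhibiting a sequence of admissible feedback policies whose normalized directed information $\frac1n C_n(P)$ approaches the value of the optimization, so it suffices to work with the objective \eqref{eq:DI_n} and lower bound it by a time-invariant policy. First I would fix an optimal triple $(\Gamma,\Pi,\hat{\Sigma})$ of the detectability-constrained problem and set $M \triangleq \Pi - \Gamma\hat{\Sigma}^\dagger\Gamma^T$, which is positive semidefinite since the Schur complement of the LMI $\left(\begin{smallmatrix}\Pi&\Gamma\\\Gamma^T&\hat{\Sigma}\end{smallmatrix}\right)\succeq0$ gives $\Pi\succeq\Gamma\hat{\Sigma}^\dagger\Gamma^T$. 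This triple defines the stationary policy $\vx_i=\Gamma\hat{\Sigma}_i^\dagger(\vhs_i-\vhhs_i)+\vm_i$, $\vm_i\sim N(0,M)$, of the form prescribed in Lemma \ref{lemma:policy}.

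Next I would track the decoder's error covariance $\hat{\Sigma}_i$ generated by this policy through the recursion \eqref{eq:lemma_riccati_recur} of Lemma \ref{lemma:ss_dec}, started from $\hat{\Sigma}_1=0$. Writing the effective transition and observation matrices $\tilde{F}\triangleq F+G\Gamma\hat{\Sigma}^\dagger$ and $\tilde{H}\triangleq H+J\Gamma\hat{\Sigma}^\dagger$, the covariance recursion is exactly the Kalman/Riccati recursion of the state space \eqref{eq:ss_dec} (the terms proportional to $\vhhs_i$ are known at the decoder and do not enter the covariance update), and the chosen $\hat{\Sigma}$ is its algebraic fixed point: a standard monotonicity argument shows that at the optimum the relaxed LMI in \eqref{eq:lemma_UB} is active, so $\Omega=K_{Y}\Psi_{Y}K_{Y}^T$, which is the algebraic Riccati equation for $\hat{\Sigma}$. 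The detectability constraint \eqref{eq:constraint_detec} asserts precisely that $(\tilde{F},\tilde{H})$ is detectable, and $\hat{\Sigma}$ is the maximal solution of that equation, since the maximization drives $\hat{\Sigma}$ as large as the monotonicity of $\Psi_{Y}$ in $\hat{\Sigma}$ permits. Invoking the convergence result already used for Assumption \ref{ass:1} \cite{convergence_detectable_initial}, detectability then yields $\hat{\Sigma}_i\to\hat{\Sigma}$, and hence $\Psi_{Y,i}\to\Psi_{Y}$ by continuity of the map \eqref{eq:lemma_ss_y_psiK}.

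With convergence in hand, the conclusion is a Cesàro argument. Since $\Psi_i\to\Psi$ by Assumption \ref{ass:1} and $\Psi_{Y,i}\to\Psi_{Y}$, continuity of $\log\det$ gives $\frac1n\sum_{i=1}^n\frac12(\log\det\Psi_{Y,i}-\log\det\Psi_i)\to\frac12\log\det\Psi_{Y}-\frac12\log\det\Psi$, the target objective. Admissibility follows similarly: the per-symbol power $\mathbf{Tr}(\Gamma\hat{\Sigma}_i^\dagger\Gamma^T+M)$ converges to $\mathbf{Tr}(\Gamma\hat{\Sigma}^\dagger\Gamma^T+M)=\mathbf{Tr}(\Pi)\le P$, so the averaged constraint \eqref{eq:lemma_policy_pow} is met in the limit. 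Combined with the Cover--Pombra reduction, this establishes the lower bound.

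I expect the main obstacle to be the dependence of the recursion on the time-varying pseudo-inverse $\hat{\Sigma}_i^\dagger$. Before $\hat{\Sigma}_i$ settles into the range of $\hat{\Sigma}$, the effective matrices $F+G\Gamma\hat{\Sigma}_i^\dagger$ and $H+J\Gamma\hat{\Sigma}_i^\dagger$ differ from $\tilde{F},\tilde{H}$, and the pseudo-inverse is discontinuous across rank changes, so the recursion is not literally a fixed-coefficient Riccati recursion during the transient. Handling this requires showing that the range of $\hat{\Sigma}_i$ stabilizes, so that the orthogonality constraint \eqref{eq:lemma_policy_orthognality} is eventually satisfied by the fixed $\Gamma$, and that the transient does not affect the limit---for instance via a short warm-up phase or a monotone comparison with the fixed-coefficient Riccati recursion---after which the detectability-based convergence of \cite{convergence_detectable_initial} applies unchanged.
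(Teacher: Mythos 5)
There is a genuine gap, and it sits exactly where you black-box the reduction. You write that ``the Cover--Pombra argument reduces the achievability claim to exhibiting a sequence of admissible feedback policies whose normalized directed information approaches the value of the optimization,'' but that reduction is precisely what fails to apply off the shelf here and is the main content of the paper's proof. Cover and Pombra's achievability in \cite[Sec.~VII]{Cover89} is proved for $\vy^n=\vx^n+\vz^n$ with $\{\vz_i\}$ a Gaussian process \emph{independent of the inputs}; in the present channel the effective noise $\vz_i\triangleq \vy_i-J\vx_i=H\vs_i+\vv_i$ is driven by past inputs through the state update $\vs_{i+1}=F\vs_i+G\vx_i+\vw_i$, so the channel is not additive in their sense. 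The paper handles this by observing that under the policy of Lemma~\ref{lemma:policy} the input is a strictly causal function of $\vz^{i-1}$ plus a causal function of the auxiliary sequence $\vm^i$, and then redoing the one step of Cover--Pombra's error analysis that breaks, namely the bound \eqref{eq:cover_64}: it computes $h(\vy^n|\vm^n)=\frac12\sum_{i=1}^n\log\det\Psi_i+\frac{p}{2}\log(2\pi e)$ via the Markov/sufficient-statistic structure of \eqref{eq:lqg_ss}, which together with \eqref{eq:cond_const} shows that rates up to $\frac1n C_n(P)$ are achievable. Without an argument of this kind your proof never establishes that the directed-information quantity \eqref{eq:DI_n} is operationally achievable for this non-additive channel, so the subsequent Ces\`aro computation has nothing to attach to.

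Your second part --- driving the decoder recursion \eqref{eq:lemma_riccati_recur} with the fixed optimal $(\Gamma,M)$ and passing to the limit --- is the part the paper delegates to \cite[Lemmas~5--6]{sabag_MIMO_isit}, and your sketch of it has its own problems. First, the convergence result \cite{convergence_detectable_initial} invoked for Assumption~\ref{ass:1} requires the initial condition to dominate the maximal solution, whereas the decoder recursion starts at $\hat{\Sigma}_1=0$, which in general does not dominate $\hat{\Sigma}$; detectability of $(F+G\Gamma\hat{\Sigma}^\dagger,H+J\Gamma\hat{\Sigma}^\dagger)$ alone does not license that citation from a zero initialization, and a monotone-comparison or stabilizing-solution argument is needed instead. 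Second, your claims that the Riccati LMI is necessarily active at the optimum and that $\hat{\Sigma}$ is the \emph{maximal} solution of the resulting algebraic equation are asserted, not proved; the route in \cite{sabag_MIMO_isit} avoids both claims by showing directly that \emph{any} feasible point satisfying the detectability constraint \eqref{eq:constraint_detec} yields an achievable rate equal to its objective value. Third, you correctly identify the discontinuity of $\hat{\Sigma}_i^\dagger$ across rank changes as an obstacle, but you leave it unresolved, so by your own account the transient analysis is incomplete. In short: the Riccati half of your proposal points in the right direction but needs the machinery of \cite{sabag_MIMO_isit}, and the coding half is missing the modification of Cover--Pombra that the paper supplies.
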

\begin{proof}[Proof of Lemma \ref{lemma:achievable}]
In \cite{Cover89}, Cover and Pombra proved an achievability bound for an additive channel of the form $\vy^n = \vx^n + \vz^n$ where $\{\vz_i\}$ is a Gaussian process with a given covariance. In \cite[Sec. VII]{Cover89}, it is shown that if the sequence $\{\vx\}$ is a sum of a strictly causal function of the sequence $\{\vz\}$, and an additional independent Gaussian sequence, any rate $R < \frac1{n} [h(\vy^n) - h(\vz^n)]$ is achievable for large enough $n$. Our setting does not fall into this problem formulation since the channel outputs cannot be written as the sum of the channel inputs and an independent noise sequence. More explicitly, if write our channel outputs as $\vy^n = \vx^n + \vz^n$ then the resulting sequence $\vz_i$ will be affected by the channel inputs in a strictly causal manner. The proof essentially shows that Cover and Pombra argument can be generalized to our general scenario so that the averaged directed information is achievable. We then apply our coding methods from \cite{sabag_MIMO_isit} to arrive at the optimization problem in Lemma \ref{lemma:achievable}.

Let $\vz_i \triangleq \vy_i - J\vx_i = H\vs_i + \vv_i$ be the \emph{effective channel noise}, so we can write the channel as $\vy^n = J_n\vx^n + \vz^n$ where $J_n$ is a block-diagonal matrix with the matrix $J$ on its main diagonal. For our particular policy in Lemma \ref{lemma:policy} $\vx_i = \Gamma(\vhs_i - \vhhs_i) + \vm_i $, the sequence $\{\vx_i\}$ is the sum of a strictly causal function of the effective noise sequence $\{\vz_i\}$, and a causal function of the sequence $\{\vm_i\}$. In other words, we can write as a short hand $$\vx_i = B_i\vz^{i-1} + C_i\vm^i,$$ for some matrices $B_i,C_i$ that are strictly causal (strictly lower-triangular) and causal (lower triangular), respectively. The proof steps in \cite[Sec. VII]{Cover89} follow directly with our modification to the effective channel noise expect for a particular chain of inequalities in their error analysis. Namely, \cite[Eq. $64$]{Cover89} begins as
\begin{align}\label{eq:cover_64}
    \Pr \{E_2|W=1\}&= \int_{(m^n,y^n)\in A_\epsilon^n} f(m^n)f(y^n)dm^n dy^n\nn\\
    &\le 2^{-n(h(\vy^n) - h(\vy^n|\vm^n)+3\epsilon)},
\end{align}
where we changed $v_i$ to $\vm_i$ in our notation. The remaining steps are now modified for our setting; the conditional term in \eqref{eq:cover_64} can be simplified as
\begin{align}
    h(\vy^n|\vm^n)&= \sum_{i=1}^n h(\vy_i|\vy^{i-1},\vm^n)\nn\\
    &\stackrel{(a)}= \sum_{i=1}^n h(\vz_i|\vx^i,\vy^{i-1},\vm^n)\nn\\
    &\stackrel{(b)}= \sum_{i=1}^n h(\vz_i|\vz^{i-1},\vx^{i-1})\nn\\
    &=\frac1{2} \sum_{i=1}^n \log\det \Psi_i + \frac{p}{2}\log (2\pi e),
\end{align}
where $(a)$ follows from the fact that $\vx^i$ is a causal function of $\{\vm_i\}$ and $(b)$ follows from the fact that, given $\vx^i$, the sufficient statistic in the state-space \eqref{eq:lqg_ss} is $\vz^{i-1}$. This completes the proof that $R< \frac1{n}[h(\vy^n) - \sum_{i=1}^n \log\det \Psi_i]$ is achievable for large enough $n$. By \eqref{eq:cond_const}, this rate is equal to the directed information in \eqref{eq:DI_n}, and by Lemma \ref{lemma:policy} the policy considered here is optimal leading to the conclusion that $R<\frac1{n}C_n(P)$ is achievable.

Given the state-space representation in Lemma \ref{lemma:ss_dec}, the rest of the proof follows from steps similar to \cite[Lemma $5-6$]{sabag_MIMO_isit} to arrive at the optimization problem in Theorem \ref{th:main} under the assumption that $(F +G\Gamma \hat{\Sigma}^\dagger,H +J\Gamma \hat{\Sigma}^\dagger)$ is detectable.
\end{proof}

\bibliography{ISIT22}
\bibliographystyle{IEEEtran}

%

\end{document}